\definecolor{labelkey}{rgb}{0,.56,.7}
\DeclareMathAlphabet{\pazocal}{OMS}{zplm}{m}{n}   
\newcommand{\Ocal}{\pazocal{O}}
\newcommand{\Lcal}{\pazocal{L}}
\newtheorem{thm}{Theorem}
\newtheorem{prop}[thm]{Proposition}
\newtheorem{lem}[thm]{Lemma}
\theoremstyle{remark}
\newtheorem*{rem}{\bf Remark}
\newtheorem{defi}{\bf Definition}
\newenvironment{exa}
  {\pushQED{\qed}\exaa}
  {\popQED\endexaa}
\newcommand*{\at}{@}
\newcommand{\nn}{\nonumber}
\def\dg{\dagger}
\def\df{\overset{\mathrm{df}}{=}}
\def\ba{\boldsymbol{\a}}
\newcommand{\ket}[1]{\mathop{|#1\rangle}\nolimits}
\newcommand{\bra}[1]{\mathop{\left<#1\,\right|}\nolimits}
\newcommand{\kbr}[2]{| #1\rangle\!\langle #2 |}
\def\ra{\rangle}
\def\l{\langle}
\newcommand{\diff}[2]{\mathrm{\,d}^#1 #2}             
\newcommand{\dif}[1]{\mathrm{\,d} #1}             
\newcommand{\difff}[1]{\mathrm{\,d}^3 #1}             
\def\a{\alpha}
\def\g{\gamma}
\def\G{\Gamma}
\def\d{\delta}
\def\om{\omega}
\def\Om{\Omega}
\def\s{\sigma}
\def\la{\lambda}
\def\bbZ{\mathbb{Z}}
\def\bbE{\mathbb{E}}
\sodef\so{}{.065em}{.4em plus1em}{2em plus.1em minus.1em}
\begin{document}

\title[Unitary evolution of a pair of Unruh-DeWitt detectors calculated efficiently...]{{Unitary evolution of a pair of Unruh-DeWitt detectors calculated efficiently to an arbitrary perturbative order}}

\begin{abstract}
Unruh-DeWitt Hamiltonian couples a scalar field with a two-level atom serving as a particle detector model. Two such detectors held by different observers following general trajectories can be used to study entanglement behavior in quantum field theory. Lacking other methods, the unitary evolution  must be studied perturbatively which is considerably time-consuming even to a low perturbative order. Here we completely solve the problem and present a simple algorithm for a perturbative calculation based on a solution of a system of linear Diophantine equations. The algorithm runs polynomially with the perturbative order. This should be contrasted with the number of perturbative contributions of the scalar $\phi^4$ theory that is known to grow factorially.

Speaking of the  $\phi^4$ model, a welcome collateral result is obtained to mechanically (almost mindlessly) calculate the interacting scalar $\phi^n$  theory without resorting to Feynman diagrams. We demonstrate it on a typical textbook example of two interacting fields for $n=3,4$.
\end{abstract}

\keywords{Unruh-DeWitt detector, Isserlis' and Wick's theorem, interacting scalar $\phi^n$ model, Feynman diagrams}

\author{Kamil Br\'adler}

\email{kbradler\at uottawa.ca}

\address{Department of Mathematics and Statistics, University of Ottawa, Ottawa, Canada}

\maketitle

\thispagestyle{empty}
\allowdisplaybreaks

\section{Introduction}\label{sec:intro}

An interaction model  between a real scalar field and a two-level atom called Unruh-DeWitt (UDW) particle detector was conceived~\cite{unruh1976notes}, improved~\cite{dewitt1979quantum} and studied in many physical situations~\cite{svaiter1992inertial,higuchi1993uniformly,ver2009entangling,reznik2005violating,schlicht2004considerations,lin2006accelerated,louko2006often,sriramkumar1996finite,barbado2012unruh,hummer2016renormalized,franson2008generation,bradler2016absolutely}. A UDW detector consists of a two-level atom linearly coupled to a scalar field. Research in this area typically focuses on a single UDW detector to probe the field or two UDW detectors to study bipartite correlations with a special interest in the vacuum entanglement. The calculation is of a perturbative character but to the author's knowledge, the Unruh-DeWitt Hamiltonian has not been (dis)proved to be exactly solvable~\cite{birrell1984quantum}. The main result of this paper is the next best thing. We also approach the problem perturbatively and provide an expansion to an arbitrary order of a coupling constant and so we can study the field-detector(s) interaction in the strong coupling regime. Crucially, unlike a typical situation in quantum field theory, we show that the number of perturbative terms grows polynomially with the perturbative order. This is in a sharp contrast with the textbook problem of the interacting scalar $\phi^n$ theory (and other more realistic models) where the number of terms grows factorially~\cite{hurst1952enumeration,bender1976statistical,cvitanovic1978number,kleinert2000recursive}. Additionally, the perturbative series are typically asymptotic and this limits even the perturbative approach itself. As one reaches  the perturbative order of approximately the inverse of the coupling strength the series summands start to increase. The series diverges~\cite{dyson1952divergence,Lipatov1976ny} and  resummation techniques must be used~\cite{zinn1981perturbation}. In the case of two  UDW detectors  we show that there is no factorial explosion and the polynomial growth indicates no problems with  convergence. Could it be then that the actual series  for two UDW detectors we obtained here is summable or is the issue of divergence more subtle and persistent? This is an interesting open question probably waiting to be rigorously settled but the results presented in this paper will prove to be useful irrespective of the answer.

Surprisingly, a part of our UDW calculation turns out to be useful for a perturbative calculation of the scattering amplitudes in the interacting scalar $\phi^n$ theory. A typical approach for the high-order loop corrections is to enumerate all participating Feynman diagrams~\cite{kajantie2002simple,kleinert2000recursive}.  Here we circumvent this step and instead use the statistics cousin of Wick's theorem~\cite{wick1950evaluation} known as Isserlis' theorem~\cite{isserlis1918formula} to directly calculate the perturbative contributions. This is possible due to an easy way of finding all (non-negative) solutions of a system of linear Diophantine equations Isserlis' theorem leads to. As a result, the scattering amplitudes for the interacting $\phi^n$ scalar model  can be found without  the need to infer Feynman's rules~\cite{schwartz2014quantum}. Feynman's diagrams can, however, be easily recovered and we demonstrate the whole process on the textbook examples of the second order expansion of the $\phi^3$ and~$\phi^4$ models (for two interacting fields). This obviously does not remove or circumvent the fast growth of the perturbative contributions in these models but it seems that the approach coming from this direction is novel and makes the perturbative calculations (even to high orders) extremely streamlined. This may be surprising due to the fact that Diophantine equations have a well-earned reputation of being hard to solve.

In Section~\ref{sec:UDW} we recall the definition of the Unruh-DeWitt detector, take two copies of it and introduce the problem we would like to solve. The solution is presented in Section~\ref{sec:phases} and is divided in three phases. In Section~\ref{sec:phin} we show how to apply our approach to perturbatively calculate the scattering amplitudes for the $\phi^n$ scalar model.

\section{Unruh-DeWitt detector}\label{sec:UDW}

A realistic UDW detector is described by the following term
\begin{equation}\label{eq:UDWHint}
  H(\tau)=\la w(\tau)(\s^+e^{i\tau\d}+\s^-e^{-i\tau\d})\int\difff{x} f(x)\phi(t,x),
\end{equation}
and
\begin{equation}\label{eq:RealField}
  \phi(t,x)={1\over(2\pi)^3}\int{\difff{k}\over2\om_k}\big(a_ke^{-i(\om_kt-k\cdot x)}+a^\dg_ke^{i(\om_kt-k\cdot x)}\big)
\end{equation}
is a real scalar field where $(\om_k,k)$ is a four-momentum, $w(\tau)$ is a detector's window function ($\tau$ is its proper time and $t(\tau),x(\tau)$ are Minkowski coordinates), $f(x)$ a smearing function, $\s^\pm$ are the detector ladder operators, $\d$ stands for the energy gap and $\la$ for a coupling constant. The evolution operator for observer $A$ reads
\begin{align}\label{eq:2ndOrderExp}
  U_A(\tau_1,\tau_0) & = \mathsf{T}{\Big\{\exp{\big[-i\int_{\tau_0}^{\tau_1}\dif{\tau'}H_A(\tau')\big]}\Big\}},
\end{align}
where $\mathsf{T}$ is the time-ordering operator, and similarly for observer $B$.

Out task is to efficiently calculate the following unitary
\begin{equation}\label{eq:twoUDWdetectors}
U_A(\tau_1,\tau_0)\otimes U_B(\tau_2,\tau_0) = e^{-i(A_+\sigma_A^++A_-\sigma_A^-+B_+\sigma_B^++B_-\sigma_B^-)} = \sum_{n=0}^\infty \frac{(-i)^n\la^n}{n!} (A_+\sigma_A^++A_-\sigma_A^-+B_+\sigma_B^++B_-\sigma_B^-)^n
\end{equation}
for any $n$, no matter how large, and for any input atomic state. We denoted
\begin{align}
    A_+ & = w(\tau_1)e^{i\tau_1\d}\phi(t_1(\tau_1)), \\
    B_+ & = w(\tau_2)e^{i\tau_2\d}\phi(t_2(\tau_2))
\end{align}
and so $A_-=A_+^\dg,B_-=B_+^\dg$ holds. Note that we set $f(x)$ to be a delta function for convenience. The result of this paper is oblivious to the details of smearing, window functions or trajectories. The important expression is Eq.~(\ref{eq:twoUDWdetectors}) and the fact that $A_\pm,B_\pm$ are bosonic in nature. Our final goal is to express $2n$-point correlators in terms of a polynomial number of 2-point correlators and not their calculation where these details are relevant.

The matrix elements of~(\ref{eq:twoUDWdetectors}) (in the canonical basis) will be shorthanded as
\begin{equation}\label{eq:unitaryEntries}
\bra{kl} U_AU_B \ket{ij},
\end{equation}
where $i,j,k,l\in\{0,1\}$ and a tensor product implied. So for the detector density matrix we get
\begin{equation}\label{eq:omega}
\om_{AB}(\kbr{kl}{mn}) = \bra{0_M}\mathsf{T}{\big\{ \bra{kl} U_AU_B \ket{ij} \big(\bra{mn} U_AU_B \ket{ij}\big)^\dg\big\}} \ket{0_M},
\end{equation}
where $\ket{0_M}$ denotes Minkowski vacuum and $\ket{ij}$ is the detectors' initial state.

\section{Two Unruh-DeWitt detectors to an arbitrary perturbative order}\label{sec:phases}

This section contains our main result. We will derive $\om_{AB}$ for $i,j=0$ and show a trivial modification enabling us to calculate Eq.~(\ref{eq:unitaryEntries}) for any canonical basis state $\ket{ij}$. This, on the other hand, will lead to  a `standalone' unitary matrix $U_A\otimes U_B$ that can be used to find $\om_{AB}$ for \emph{any}  detector input state and thus completely solving the problem. In the first step (Phase~I) we tame the exponential number (in $n$) of summands of the core expression~Eq.~(\ref{eq:unitaryEntries}) (or~(\ref{eq:twoUDWdetectors})).
By plugging the result into~Eq.~\ref{eq:omega} we get again a polynomial amount of $2n$-point Green's functions. We will be able to split them into a product of two-point correlation functions by constructively solving a system of linear Diophantine equations with a polynomial number of non-negative solutions in $n$ (Phase~II) and calculate their multiplicities (Phase~III).

\subsection*{Phase~I}
Let us recall the elements of Pauli operators algebra representing the two-level detector(s). They satisfy $(\s^+)^2=(\s^-)^2=0$ and also $\s_A^\pm\s_B^\pm=\s_B^\pm\s_A^\pm$ together with $\s_A^\pm\s_B^\mp=\s_B^\mp\s_A^\pm$. That is, the atomic operators for the two detectors commute in the Unruh-DeWitt model.
\begin{prop}\label{prop:phaseI}
  For $n=2m$ we obtain from Eq.~(\ref{eq:twoUDWdetectors})
  \begin{align}\label{eq:expEven}
    (A_+\sigma_A^++A_-\sigma_A^-+B_+\sigma_B^++B_-\sigma_B^-)^{2m}\ket{00}
    &=\sum_{\ell=0}^m\binom{2m}{2\ell}\,A_+^\ell A_-^\ell B_+^{m-\ell}B_-^{m-\ell}\ket{00}\nn\\
    &+\sum_{\ell=1}^m\binom{2m}{2\ell-1}\,A_+^{\ell}A_-^{\ell-1} B_+^{m-\ell+1}B_-^{m-\ell}\ket{11}
  \end{align}
  and for $n=2m+1$ we get
  \begin{align}\label{eq:expOdd}
    (A_+\sigma_A^++A_-\sigma_A^-+B_+\sigma_B^++B_-\sigma_B^-)^{2m+1}\ket{00}
    &=\sum_{\ell=0}^m\binom{2m+1}{2\ell}\,A_+^\ell A_-^\ell B_+^{m-\ell+1}B_-^{m-\ell}\ket{01}\nn\\
    &+\sum_{\ell=0}^m\binom{2m+1}{2\ell}\,A_+^{m-\ell+1}A_-^{m-\ell} B_+^{\ell}B_-^{\ell}\ket{10}.
  \end{align}
\end{prop}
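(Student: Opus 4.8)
The plan is to uncover the tensor-product structure concealed in the four-term sum. Set $\mathcal A:=A_+\sigma_A^++A_-\sigma_A^-$ and $\mathcal B:=B_+\sigma_B^++B_-\sigma_B^-$, so that the operator in Proposition~\ref{prop:phaseI} is $(\mathcal A+\mathcal B)^n$ acting on $\ket{00}=\ket{0}_A\otimes\ket{0}_B$. The first step is to note that $\mathcal A$ and $\mathcal B$ commute: expanding $[\mathcal A,\mathcal B]$ into four commutators and using that every $\sigma_A^\pm$ commutes with every $\sigma_B^\pm$ (the relations recalled before the proposition) and with every bosonic factor, one is left with $[\mathcal A,\mathcal B]=\sum_{a,b\in\{+,-\}}[A_a,B_b]\,\sigma_A^{a}\sigma_B^{b}$, which vanishes as soon as the bosonic prefactors are treated as mutually commuting — all that is required, and consistent both with the bosonic nature of $A_\pm,B_\pm$ and with their eventual appearance inside the time-ordered vacuum expectation $\om_{AB}$. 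Granting this, the ordinary binomial theorem gives $(\mathcal A+\mathcal B)^n=\sum_{j=0}^n\binom{n}{j}\mathcal A^j\mathcal B^{\,n-j}$, and since $\mathcal A$ touches only the $A$ tensor factor and $\mathcal B$ only the $B$ factor, the action on $\ket{00}$ factorizes as $\sum_{j=0}^n\binom{n}{j}\bigl(\mathcal A^j\ket{0}_A\bigr)\otimes\bigl(\mathcal B^{\,n-j}\ket{0}_B\bigr)$.

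The second step is the single-detector identity. Expanding $\mathcal A^j$ into $2^j$ words and moving the commuting bosonic factors to the left leaves, for each $w\in\{+,-\}^j$, the string $\sigma_A^{w_1}\cdots\sigma_A^{w_j}$ acting on $\ket{0}_A$. Because $(\sigma^\pm)^2=0$, this string survives only for the unique strictly alternating $w$ that reads $\sigma_A^+,\sigma_A^-,\sigma_A^+,\dots$ starting from the right; tracking it (a one-line induction on $j$) yields
\begin{equation*}
\mathcal A^{2\ell}\ket{0}_A=A_+^{\ell}A_-^{\ell}\ket{0}_A,\qquad \mathcal A^{2\ell+1}\ket{0}_A=A_+^{\ell+1}A_-^{\ell}\ket{1}_A,
\end{equation*}
where $[A_+,A_-]=0$ (the field evaluated at a single spacetime point) is used to collapse the alternating bosonic product into $A_+^{\bullet}A_-^{\bullet}$; the identical statements hold for $\mathcal B$ on $\ket{0}_B$.

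The final step is bookkeeping. Substituting the two formulas into $\sum_{j}\binom{n}{j}\bigl(\mathcal A^j\ket{0}_A\bigr)\otimes\bigl(\mathcal B^{\,n-j}\ket{0}_B\bigr)$ and splitting by the parity of $j$: for $n=2m$ the even terms $j=2\ell$ ($0\le\ell\le m$) produce $\binom{2m}{2\ell}A_+^{\ell}A_-^{\ell}B_+^{m-\ell}B_-^{m-\ell}\ket{00}$ and the odd terms $j=2\ell-1$ ($1\le\ell\le m$) produce $\binom{2m}{2\ell-1}A_+^{\ell}A_-^{\ell-1}B_+^{m-\ell+1}B_-^{m-\ell}\ket{11}$, i.e.\ \eqref{eq:expEven}; for $n=2m+1$ the even $j=2\ell$ give the $\ket{01}$ sum of \eqref{eq:expOdd} directly, while the odd $j=2\ell+1$ give a $\ket{10}$ sum that matches the second line of \eqref{eq:expOdd} after the relabelling $\ell\mapsto m-\ell$ together with $\binom{2m+1}{k}=\binom{2m+1}{2m+1-k}$. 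I do not anticipate a real obstacle; the only point needing genuine care is the commutativity bookkeeping at the start — that $[\mathcal A,\mathcal B]=0$ truly follows from the stated $\sigma$-relations plus the commuting bosonic operators, and that $[A_+,A_-]=[B_+,B_-]=0$ is exactly what permits writing the alternating products in the closed form displayed above. Everything downstream is the binomial theorem and re-indexing.
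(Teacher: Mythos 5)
Your proof is correct, but it follows a genuinely different route from the paper's. The paper argues combinatorially: it enumerates the nonzero strings of ladder operators directly, puts them in bijection with the branches of a perfect binary tree of depth $n$, and reads off the binomial coefficients by counting paths with a prescribed number of left branchings. You instead observe that $\mathcal{A}=A_+\s_A^++A_-\s_A^-$ and $\mathcal{B}=B_+\s_B^++B_-\s_B^-$ commute, apply the ordinary binomial theorem to $(\mathcal{A}+\mathcal{B})^n\ket{00}$, and reduce everything to the one--detector identities $\mathcal{A}^{2\ell}\ket{0}_A=A_+^{\ell}A_-^{\ell}\ket{0}_A$ and $\mathcal{A}^{2\ell+1}\ket{0}_A=A_+^{\ell+1}A_-^{\ell}\ket{1}_A$, which follow from nilpotency of $\s^\pm$; the reindexing $\ell\mapsto m-\ell$ with $\binom{2m+1}{k}=\binom{2m+1}{2m+1-k}$ that recovers the second line of Eq.~(\ref{eq:expOdd}) checks out. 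Your approach is shorter and makes explicit a hypothesis the paper's proof uses only tacitly: both arguments collapse differently ordered bosonic products into the single monomial $A_+^{\ell}A_-^{\ell}B_+^{m-\ell}B_-^{m-\ell}$, which is legitimate only because $A_\pm,B_\pm$ are treated as mutually commuting --- exactly, for $[A_+,A_-]$ (same spacetime point of a real field), and effectively for the $A$--$B$ cross terms because the products ultimately sit inside the time--ordered correlator of Eq.~(\ref{eq:omega}); you flag this correctly. What the paper's tree picture buys in exchange is the Remark following the proof: by relabelling the arrows, the identical path count handles an arbitrary initial basis state $\ket{ij}$, whereas your reduction would need the companion identities for $\mathcal{A}^j\ket{1}_A$ (equally easy, but not written down).
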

\begin{proof}
  The initial state $\ket{0}$ is a ground state: $\s^-\ket{0}=0$. There are two possibilities for the output states: $\ket{0}$ and $\ket{1}$ but taking into account the nilpotency property of $\s^\pm$ we can also enumerate all possible ways the output states can be obtained. It is simply
  \begin{align}\label{eq:allPaths}
    \ket{0} & = (\s^-\s^+)^{p}\ket{0}, \\
    \ket{1} & = \s^+(\s^-\s^+)^{p-1}\ket{0}
  \end{align}
  for all $p>0$. All other operator sequences result in zero. In our case we have two sets of ladder operators -- for the $A$ and $B$ Hilbert spaces and so
\begin{subequations}\label{eq:allPaths2Atoms}
  \begin{align}
    \ket{00} & = \sum_{\genfrac{}{}{0pt}{2}{\mathrm{nonzero\ products\ s.t.}}{2p+2q=n}}(\s_A^-\s_A^+)^{p}(\s_B^-\s_B^+)^{q}\ket{00}, \label{eq:allPaths2Atoms00} \\
    \ket{11} & = \sum_{\genfrac{}{}{0pt}{2}{\mathrm{nonzero\ products\ s.t.}}{2p+2q-2=n}}(\s_A^-)^{p-1}(\s_A^+)^{p}(\s_B^-)^{q-1}(\s_B^+)^{q}\ket{00}, \label{eq:allPaths2Atoms11} \\
    \ket{01} & = \sum_{\genfrac{}{}{0pt}{2}{\mathrm{nonzero\ products\ s.t.}}{2p+2q=n}}(\s_A^-)^{p}(\s_A^+)^{p}(\s_B^-)^{q-1}(\s_B^+)^{q}\ket{00}, \label{eq:allPaths2Atoms01} \\
    \ket{10} & = \sum_{\genfrac{}{}{0pt}{2}{\mathrm{nonzero\ products\ s.t.}}{2p+2q-1=n}}(\s_A^-)^{p-1}(\s_A^+)^{p}(\s_B^-)^{q}(\s_B^+)^{q}\ket{00}.\label{eq:allPaths2Atoms10}
  \end{align}
\end{subequations}
  In the first two lines $n$ is even and in the last two lines $n$ is odd. The sums should be understood as counting all possible strings of the sigma operators leading to a given output state on the left and we will now find the explicit expressions. Let's call them \emph{nonzero strings}. Given the desired output state,  the main observation is that there are always only two sigma operators that do not destroy the previous nonzero string. This is because every nonzero string acting on $\ket{00}$ can land in only one of the four possible states. For instance, if the state is $\ket{01}$ the next sigma operator can be $\s^+_A$ or $\s_B^-$ to get another nonzero string. Similarly for the other three states and so we can represent all nonzero strings as a complete binary tree of the depth $n$ where we adopt the following convention: The root node is the initial state $\ket{00}$ where the left offsprings correspond to the action of $\s_A^+$ and the right offsprings correspond to $\s_B^+$. The nodes are labeled by the resulting state $\ket{ij}$, see Fig.~\ref{fig:littlebintree}. In all the branchings that follow the left offsprings correspond to the action of $\s_A^\pm$ and the right offsprings correspond to $\s_B^\pm$.  Given the offspring node, the choice of either $+$ or $-$ in the branching is unambiguous.
    \begin{figure}[h]
    \fcolorbox{white}{white}{\includegraphics[scale=.82]{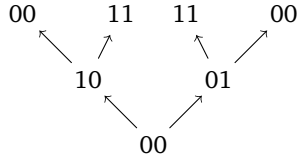}}
    \caption{All paths for the first two branchings starting from $\ket{00}_{AB}$ are depicted. The arrows indicate the action of the ladder operators $\s_{A,B}^\pm$. The resulting state is a new node.}\label{fig:littlebintree}
  \end{figure}
  \begin{figure}[t]
   \fcolorbox{white}{white}{\includegraphics[scale=.75]{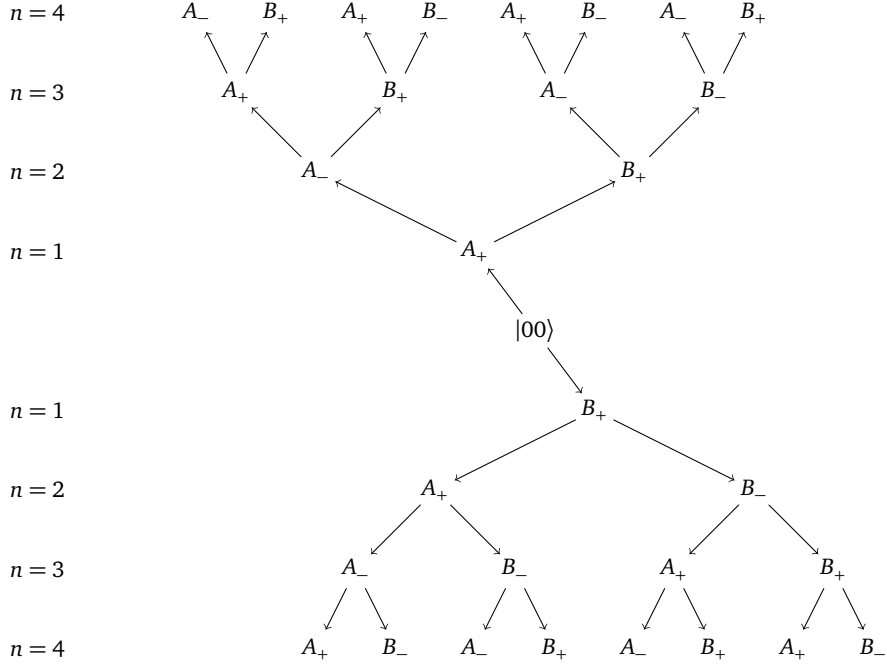}}
    \caption{A schematic depiction of the action of Eqs.~(\ref{eq:expEven}) and~(\ref{eq:expOdd}) for $n=4$ branchings. As in Fig.~\ref{fig:littlebintree} the arrows symbolize the action of the detector ladder operators $\s_{A,B}^\pm$. The left-pointing arrows correspond to the $A$ detector while the right pointing arrows are for the $B$ detector. The initial state is $\ket{00}$ and the symbols $A_\pm$ and $B_\pm$ are the boson operators `left behind' after the action of $\s_{A,B}^\pm$. So as we follow any path in the tree (up or down from $\ket{00}$) we collect the boson operators thus forming an operator sequence in Eqs.~(\ref{eq:expEven}) and~(\ref{eq:expOdd}). There are many paths with the same operator products and their multiplicities are the binomial coefficients derived in Proposition~\ref{prop:phaseI}.}
    \label{fig:bintree}
\end{figure}
%
  But this immediately provides the desired counting because we have just proved a bijection between the number of branches of a perfect binary tree of the depth $n$ and the number of non-zero strings of the same length. Consider~Eq.~(\ref{eq:allPaths2Atoms00}). We set the total number of branchings $n=2m$ and $p=\ell$ giving us $q=m-\ell$. The number of paths with $2\ell$ left branchings (one $\ell$ for $\s_A^+$ and one $\ell$ for $\s_A^-$ making it even since we have to end up in $\ket{0}_A$)  out of the total $n=2m$ branchings is $\binom{2m}{2\ell}$ and so is the number of nonzero strings in~Eq.~(\ref{eq:allPaths2Atoms00}). Identically, for~(\ref{eq:allPaths2Atoms11}) we again set $n=2m$ and  $p=\ell$ and obtain $\binom{2m}{2\ell-1}$ of nonzero strings ($2\ell-1$ left branchings and $m-\ell+1$ right branchings). For Eq.~(\ref{eq:allPaths2Atoms01}) we set $n=2m+1$ and $p=\ell$. Hence $\binom{2m+1}{2\ell}$ is the total number of nonzero strings. Finally, for Eq.~(\ref{eq:allPaths2Atoms10}) we have $n=2m+1$ and we now set $q=\ell$. In this way we obtain the same number of nonzero strings $\binom{2m+1}{2\ell}$ as in the previous case.

  The proposition statement is obtained by inspecting the LHS of Eqs.~(\ref{eq:expEven}) and~(\ref{eq:expOdd}) where the boson operators $A_\pm,B_\pm$ accompany the corresponding ladder operators and the RHS consist of the products of the boson operators from nonzero strings.
\end{proof}
\begin{rem}
    Fig.~\ref{fig:bintree} illustrates the proof. It also illustrates the fact that the same analysis can be done for any initial state $\ket{ij}$. Depending on the values of $i$ and $j$ the arrows in the binary tree will represent different ladder operators leading to the same counting and different boson operators on the RHS of  Eqs.~(\ref{eq:expEven}) and~(\ref{eq:expOdd}). In this way we are able to find all the unitary matrix elements and therefore we can act on any input state (pure or mixed) from a space spanned by $\ket{ij}_{AB}$.
\end{rem}
\begin{exa}
  Setting $m=0,1,2$ in Eq.~(\ref{eq:expEven}) and $m=0,1$ in Eq.~(\ref{eq:expOdd}) (corresponding to the perturbative order $n=4$) we quickly reproduce the fourth order calculation in~\cite{bradler2016absolutely} (Eqs.~(11)) and the result can be applied to a variety of situations~\cite{ver2009entangling,reznik2005violating,cliche2010information,lin2006accelerated}. With the same ease we can obtain an output state for $m\gg0$.

  A non-trivial check is the trace of~(\ref{eq:omega}) being equal to one irrespective of the perturbative order. Case $n=4$ was verified in~\cite{bradler2016absolutely} and a straightforward calculation shows that it is true for $n=6$ as well\footnote{Note that the fourth order is an absolute must if one wants to calculate any entropic quantity to properly assess the importance of entanglement for quantum communication. If only the second order is calculated (almost always the case with a notable exception of~\cite{bradler2016absolutely}) then the $\kbr{11}{11}$ component of~(\ref{eq:omega}) is zero and two of the eigenvalues are negative.}.
\end{exa}

\subsection*{Phase~II}
We found the expressions $ \bra{kl} U_AU_B \ket{00}$ and, crucially, the exponential growth in $n$ in Eq.~(\ref{eq:twoUDWdetectors}) was managed such that the number of summands in Eqs.~(\ref{eq:expEven}) and~(\ref{eq:expOdd}) increases polynomially with $n$. However, upon inserting two such expressions into Eq.~(\ref{eq:omega}) we note that the exponential growth crept back. This is because of the following result:
\begin{thm}[Isserlis'~\cite{isserlis1918formula}]\label{thm:Isserlis}
  Let $x_i$ be a Gaussian random variable satisfying $\bbE[\prod_{i=1}^{2m+1}x_i]=0$. Then
  \begin{equation}\label{eq:Isserlis}
    \bbE\Big[\prod_{i=1}^{2m}x_i\Big]=\sum_{r=1}^{(2m-1)!!}\prod_{\genfrac{}{}{0pt}{2}{j,k=1}{j<k}}^{m}\bbE_r[x_{j}x_{k}],
  \end{equation}
  where the sum goes over the products of bivariate expectation values $\bbE_r$.
\end{thm}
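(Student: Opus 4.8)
The plan is to prove Isserlis' identity by the standard generating-function argument, which fits the ``mechanical'' spirit of the paper. First I would make the hypothesis precise: read $x_1,\dots,x_{2m}$ as the components (repetitions allowed) of a jointly Gaussian random vector, the stated vanishing of all odd-order products forcing every $\bbE[x_i]$ to be zero (take the one-factor ``product''). Writing $\Sigma_{jk}=\bbE[x_jx_k]$ for the covariance, the characteristic function is
\begin{equation*}
  \varphi(t_1,\dots,t_{2m})=\bbE\Big[\exp\Big(i\sum_{k=1}^{2m}t_kx_k\Big)\Big]=\exp\Big(-\tfrac12\sum_{j,k=1}^{2m}t_jt_k\Sigma_{jk}\Big),
\end{equation*}
and, since $\varphi$ is entire and all moments are finite, differentiation under the expectation gives $\bbE[\prod_{k=1}^{2m}x_k]=(-i)^{2m}\,\partial_{t_1}\cdots\partial_{t_{2m}}\varphi\big|_{t=0}$.

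Second, I would compute the iterated derivative of $\exp(Q(t))$ with $Q(t)=-\tfrac12\sum t_jt_k\Sigma_{jk}$ by bookkeeping: applying $\partial_{t_k}$ either hits the exponential, producing the linear factor $\partial_{t_k}Q=-\sum_j t_j\Sigma_{jk}$, or differentiates one of the linear factors already produced, turning some $t_j\Sigma_{jk'}$ into $\Sigma_{kk'}$. Setting $t=0$ at the end annihilates every term still carrying a linear factor, so only the terms in which the $2m$ derivative slots have been matched up in pairs survive; each such perfect matching $r$ of $\{1,\dots,2m\}$ contributes $\prod_{\{j,k\}\in r}(-\Sigma_{jk})$. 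Since there are exactly $(2m-1)!!$ perfect matchings and the overall sign is $(-i)^{2m}(-1)^m=(-1)^m(-1)^m=1$, collecting terms yields precisely Eq.~(\ref{eq:Isserlis}), the index $r$ running over the $(2m-1)!!$ pairings and $\bbE_r[x_jx_k]$ denoting the covariance of the $r$-th matched pair.

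An equivalent and perhaps cleaner route is induction on $m$ via the Gaussian integration-by-parts (Stein) identity $\bbE[x_{2m}\,g(x)]=\sum_{k=1}^{2m-1}\bbE[x_{2m}x_k]\,\bbE[\partial_{x_k}g(x)]$, applied to $g(x)=x_1\cdots x_{2m-1}$: this expresses the $2m$-point moment as $\sum_k\Sigma_{2m,k}$ times a $(2m-2)$-point moment, which by the inductive hypothesis is the sum over matchings of $\{1,\dots,2m-1\}\setminus\{k\}$; reassembling gives the sum over all matchings of $\{1,\dots,2m\}$, with the count $(2m-1)(2m-3)!!=(2m-1)!!$. The base case $m=1$ is the definition of covariance.

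The main obstacle is not conceptual but organizational: in the generating-function proof one must set up notation making ``the surviving terms are indexed by perfect matchings, each appearing exactly once'' a genuine identity rather than a hand-wave — i.e.\ track which derivative produced which linear factor — and then verify the sign cancellation. (In the induction proof the analogous care goes into proving Stein's identity, which is one integration by parts against the Gaussian density, plus checking that the recursive splitting of pairings is a bijection.) A minor point worth a sentence is the reduction to the jointly Gaussian, mean-zero setting and the justification of differentiating under $\bbE$, both standard given finiteness of all moments.
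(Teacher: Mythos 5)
The paper contains no proof of this statement: Theorem~\ref{thm:Isserlis} is imported verbatim as Isserlis' (equivalently Wick's) theorem, with the proof delegated entirely to the citation. So there is nothing internal to compare against, but your argument is a correct, standard, self-contained proof and would fill that gap. The characteristic-function route is sound: reading the hypothesis as ``jointly Gaussian with all odd moments vanishing'' does force zero mean (take the one-factor product), the observation that only the terms in which all $2m$ derivative slots have been paired survive at $t=0$ is the combinatorial heart of the matter, and the sign bookkeeping $(-i)^{2m}(-1)^m=1$ together with the count of $(2m-1)!!$ perfect matchings is exactly right. The inductive alternative via Gaussian integration by parts is equally valid and somewhat lighter notationally; the recursion $(2m-1)\cdot(2m-3)!!=(2m-1)!!$ is precisely the bijection between matchings of $\{1,\dots,2m\}$ and a choice of partner for the element $2m$ followed by a matching of the rest, which is the same bijection the paper implicitly relies on when it checks $\sum_c\mu_{\ba_c}=(2m-1)!!$ in its examples. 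One caveat worth a sentence in a final write-up: the paper's actual application is to vacuum expectations of \emph{time-ordered} products of free fields, for which the relevant statement is the operator (Wick) form with the Feynman propagator playing the role of the covariance; the paper's own remark after the theorem acknowledges this distinction, and your probabilistic proof covers the statement exactly as it is phrased.
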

\begin{rem}[notational]
  In our case the Gaussian random variable will be a product of time-ordered free scalar fields $\phi_0(x_i)$ and the expectation value will be taken wrt to Minkowski (i.e non-interacting) vacuum $\ket{0_M}$:
  \begin{equation}\label{eq:notation}
     \bbE\Big[\prod_{i=1}^{2m}x_i\Big]=\bra{0_M}\mathsf{T}\big\{\prod_{i=1}^{2m}\phi_{0,i}\big\}\ket{0_M}\equiv\l\prod_{i=1}^{2m}i\ra_0,
  \end{equation}
  where  $\phi_{0,i}\equiv\phi_0(x_i)$ and on the RHS is the minimalist notation we will be mostly using.
\end{rem}
\begin{rem}
  The theorem is also known as Wick's theorem~\cite{wick1950evaluation}. Wick's theorem transforms time-ordered operator expression into a normal form~\cite{schwartz2014quantum} and Isserlis' result is recovered upon taking a (free) vacuum expectation value. As a matter of fact, Isserlis' theorem is more general since naturally there is no notion of time-ordering  in Eq.~(\ref{eq:Isserlis}) and so the theorem applies even for `ordinary' products of free scalar fields. Of course, we are actually calculating the time ordered version as it appears in the definition of the sought after Green's function.
\end{rem}
The exponential growth lies in the presence of the double factorial in~(\ref{eq:Isserlis}). For $2m$ different fields in Eq.~(\ref{eq:notation}) there is not much to do but the situation is different for a constant number of fields. We first prove the following result whose application goes beyond the problem solved in this paper (see Sec.~\ref{sec:phin}).
\begin{thm}\label{thm:phaseII}
  Let $\ell_i\geq0$ and $\sum_{i=1}^4\ell_i=2m$. Then Green's function $\l1^{\ell_1}2^{\ell_2}3^{\ell_3}4^{\ell_4}\ra_0$ can be written in terms of a polynomial number of products of two-point correlation functions:
  \begin{equation}\label{eq:GreensSplit}
    \l1^{\ell_1}2^{\ell_2}3^{\ell_3}4^{\ell_4}\ra_0
    \equiv\l\underbrace{1\dots1}_{\ell_1}\underbrace{2\dots2}_{\ell_2}\underbrace{3\dots3}_{\ell_3}\underbrace{4\dots4}_{\ell_4}\ra_0
    =\sum_{\ba}\mu_{\ba}\prod_{\genfrac{}{}{0pt}{2}{i,j=1}{i<j}}^{4}\l ij\ra^{\a_{ij}}_0,
  \end{equation}
  where $\mu_{\ba}$ is the product multiplicity, $\a_{ij}\in\bbZ_{\geq0}$ and  $\ba\df(\a_{11},\a_{12},\a_{13},\a_{14},\a_{22},\a_{23},\a_{24},\a_{33},\a_{34},\a_{44})$.
\end{thm}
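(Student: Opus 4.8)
The plan is to feed the $2m$-fold product $\phi_{0,1}^{\ell_1}\phi_{0,2}^{\ell_2}\phi_{0,3}^{\ell_3}\phi_{0,4}^{\ell_4}$ into Isserlis'/Wick's theorem (Theorem~\ref{thm:Isserlis}) and then cash in the fact that only \emph{four} distinct fields appear. Picture the $2m$ factors as ``slots'' coloured by which of the four fields they carry, so that colour class $i$ contains $\ell_i$ slots. Wick's theorem writes $\l 1^{\ell_1}2^{\ell_2}3^{\ell_3}4^{\ell_4}\ra_0$ as a sum over the $(2m-1)!!$ perfect matchings of the slot set, each matching contributing the product over its pairs of the appropriate two-point functions. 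This contribution depends only on how many pairs join colour $i$ to colour $j$, so I would group the matchings by their \emph{profile} $\ba=(\a_{11},\a_{12},\dots,\a_{44})$, where $\a_{ij}$ (for $i\le j$) is the number of pairs of colour type $\{i,j\}$ — the diagonal entry $\a_{ii}$ counting the self-pairings of colour $i$. Collecting terms produces~\eqref{eq:GreensSplit}, with $\mu_{\ba}$ the number of perfect matchings of profile $\ba$ and the product taken over all $1\le i\le j\le4$ (the self-contractions $\l ii\ra_0$ carrying the exponents $\a_{ii}$).

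The next step is to identify which profiles actually occur. Counting the endpoints of colour $i$ used up by a matching of profile $\ba$ — two per self-pair, one per cross-pair of type $\{i,j\}$ with $j\neq i$ — shows that $\ba$ is realised by at least one matching \emph{if and only if} it solves the linear system
\[
  2\a_{ii}+\sum_{j\neq i}\a_{ij}=\ell_i\qquad(i=1,\dots,4),\qquad \a_{ij}=\a_{ji}\in\bbZ_{\ge0};
\]
necessity is the endpoint count, sufficiency a one-line greedy construction of a matching with the prescribed block sizes. Summing the four equations gives $\sum_{i\le j}\a_{ij}=m$, confirming that every admissible profile uses exactly $m$ pairs. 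This is precisely a system of linear Diophantine equations, in keeping with the paper's theme, and is also why $\ba$ carries all $10$ entries.

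The polynomial bound — the real content of the theorem — then follows quickly: solving the four equations for the diagonal entries, $\a_{ii}=\tfrac{1}{2}\big(\ell_i-\sum_{j\neq i}\a_{ij}\big)$, shows that a solution is determined by the six off-diagonal values $\a_{12},\a_{13},\a_{14},\a_{23},\a_{24},\a_{34}$, each a non-negative integer with $\sum_{i<j}\a_{ij}\le m$; hence the number of admissible profiles — equivalently, the number of distinct two-point-function monomials on the right-hand side of~\eqref{eq:GreensSplit} — is at most $\binom{m+6}{6}=O(m^6)$, a polynomial in $m$. The multiplicities $\mu_{\ba}$ are the well-defined non-negative integers that Phase~III will evaluate in closed form; the route there is to partition each colour class into $2\a_{ii}$ self-paired slots and $\a_{ij}$ slots reserved for colour $j$ (a multinomial factor), perfectly match the self-paired slots in $(2\a_{ii}-1)!!$ ways, and pair the two sides of each cross-block by a bijection in $\a_{ij}!$ ways, whereupon the factorials collapse to $\mu_{\ba}=\prod_i\ell_i!\,\big/\,\big(\prod_i 2^{\a_{ii}}\a_{ii}!\,\prod_{i<j}\a_{ij}!\big)$.

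I expect no structural obstacle: once the admissible profiles are recognised as the solutions of the Diophantine system, ``polynomially many terms'' is just counting lattice points in a region of fixed dimension. The step that needs genuine care is the multiplicity bookkeeping of the previous paragraph — keeping the three sources of choice (which slots go where, the internal matchings, the cross-block bijections) consistent so that the $(2\a_{ii})!$ and $\a_{ij}!$ factors cancel exactly — and, already at the level of the statement, remembering the diagonal self-contractions $\l ii\ra_0$, which are precisely what force the entries $\a_{ii}$ into $\ba$.
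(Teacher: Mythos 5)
Your argument is correct and rests on the same central idea as the paper's proof: the conformation exponents that actually occur are exactly the non-negative solutions of the linear Diophantine system $2\a_{ii}+\sum_{j\neq i}\a_{ij}=\ell_i$ ($i=1,\dots,4$), and the theorem reduces to showing that this system has polynomially many non-negative solutions. You diverge in two ways, both worth noting. First, you supply the step the paper leaves implicit: it passes directly from Isserlis' theorem to the system \eqref{eq:Diophantine}, whereas you derive \eqref{eq:GreensSplit} by grouping the $(2m-1)!!$ perfect matchings by their profile $\ba$ and verifying (endpoint count for necessity, greedy construction for sufficiency) that the realizable profiles are precisely the solutions of the system. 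Second, your count is both cleaner and sharper: summing the four equations gives $\sum_{i\le j}\a_{ij}=m$, so every solution is determined by its six off-diagonal entries, a non-negative $6$-tuple of sum at most $m$, whence at most $\binom{m+6}{6}=\Ocal(m^6)$ conformations. The paper instead enumerates solutions row by row, settles for a crude $\Ocal(\ell^{12})$ bound as ``enough for the proof,'' asserts $\Ocal(\ell^6)$ after ``more careful counting,'' and devotes Lemma~\ref{lem:estimateSols} to a tighter estimate --- your one-line bound already delivers the correct order of growth. Your closed multiplicity formula $\mu_{\ba}=\prod_i\ell_i!\big/\big(\prod_i 2^{\a_{ii}}\a_{ii}!\prod_{i<j}\a_{ij}!\big)$ agrees numerically with the sequential construction of Theorem~\ref{thm:phaseIII} (it reproduces the values in Example~\ref{exa:ell2222}), though that bookkeeping properly belongs to Phase~III rather than to this theorem.
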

\begin{defi}\label{def:coformations}
  Different products $\prod_{\genfrac{}{}{0pt}{2}{i,j=1}{i<j}}^{4}\l ij\ra^{\a_{ij}}_0$~will be called \emph{conformations} and $\ba$ a \emph{conformation exponent}.
\end{defi}
\begin{rem}
  At first sight there is nothing surprising about Eq.~(\ref{eq:GreensSplit}). Except for the double factorial the equation looks exactly like Eq.~(\ref{eq:Isserlis}) and it must be that way. After all, it is its special case. The novelty lies in the polynomial number of summands for a constant number of fields (four in this case) and in a constructive (and simple) way of obtaining all the coefficients $\a_{ij}$. The multiplicity factors $\mu_{\ba}$ are another crucial ingredient and they will be calculated in Theorem~\ref{thm:phaseIII}.
\end{rem}
\begin{proof}
  The solution of Eq.~(\ref{eq:GreensSplit}) can be found by solving the following system of four linear Diophantine equations for $\a_{ij}\in\bbZ_{\geq0}$:
\begin{subequations}\label{eq:Diophantine}
  \begin{align}
    2\a_{11}+\a_{12}+\a_{13}+\a_{14} & = \ell_1,\label{eq:Diophantine1} \\
    \a_{12}+2\a_{22}+\a_{23}+\a_{24} & = \ell_2,\label{eq:Diophantine2} \\
    \a_{13}+\a_{23}+2\a_{33}+\a_{34} & = \ell_3, \\
    \a_{14}+\a_{24}+\a_{34}+2\a_{44} & = \ell_4.\label{eq:Diophantine4}
  \end{align}
\end{subequations}
Linear Diophantine equations have zero or infinitely many solutions (counting both positive and negative ones). The numerical coefficients for the variables $\a_{ij}$ are so simple  that we can easily guess a solution for~(\ref{eq:Diophantine1}) (say $\a_{12}=\a_{13}=0$ and $\a_{14}=\ell_1-2\a_{11}$) and so there are infinitely many solutions. It also means that there are finitely many non-negative solutions we are after. How do we obtain them in a systematic way? Since $0\leq\a_{ii}\leq\lfloor\ell_i/2\rfloor$ and $0\leq\a_{ij}\leq\min{\{\ell_i,\ell_j\}}$ for $i\neq j$ we fix $\a_{11},\a_{12}$ (starting from $\a_{11}=\a_{12}=0$) and simply list all the admissible $\a_{13}$ and $\a_{14}$. We continue by increasing $\a_{12}$ by one and repeat the whole process and the same for $\a_{11}$. In the next step we move to Eqs.~(\ref{eq:Diophantine2})-(\ref{eq:Diophantine4}) by inserting all found triples $\{\a_{12},\a_{13},\a_{14}\}$ and repeating the procedure until we find all admissible solutions in the second row. We insert the found solutions to the third and fourth row until we find all solutions. The result is a complete list of ten-tuples $\boldsymbol{\a}$ solving Eqs.~(\ref{eq:Diophantine}).

How many different  $\boldsymbol{\a}$'s are there? We don't need an exact number (it is actually not that straightforward to count the solutions as we will see in Lemma~\ref{lem:estimateSols}) and a polynomial upper bound is enough\footnote{More precisely, we need to list all the solutions  $\ba$ and once we have them we simply count them. What we don't need for the purpose of this work is a closed expression giving the number of non-negative solutions of~(\ref{eq:Diophantine}).}. The number of solutions for Eq.~(\ref{eq:allPaths2Atoms00}) depends on the parity of $\ell_1$. For $\ell_1$ even and $\a_{11}=0$ the number of nonnegative solutions is a triangle number $(2\ell_1+1)(2\ell_1+2)/2$ and by incrementing $\a_{11}$ by one up to $\ell_1/2$ we get the rest of solutions. Summing them up we obtain a total of
\begin{equation}\label{eq:1lineSols}
  \sum_{s=0}^{\ell_1/2}{(2s+1)(2s+2)\over2}={1\over24}(2+\ell_1)(4+\ell_1)(3+2\ell_1)
\end{equation}
solutions. A similar analysis leads to the following number of solutions for an odd $\ell_1$:
\begin{equation}
  \sum_{s=0}^{(\ell_1+1)/2}{2s(2s+1)\over2}={1\over24}(1+\ell_1)(3+\ell_1)(7+2\ell_1).
\end{equation}
If all equations~(\ref{eq:Diophantine}) were independent we would get a scaling for the number of solutions like $\Ocal(\ell^{12})$, where $\ell=\max_i{\ell_i}$. This is enough for the proof but it grossly overestimates the number of solutions. A more careful counting gives us the $\Ocal(\ell^{6})$ scaling.
\end{proof}
\begin{rem}
  There is an extensive literature on how to solve a system of linear Diophantine equations~\cite{schrijver1998theory,antsaklis2006linear}. But the question is not really whether one can solve it but how fast (see, for example,~\cite{kannan1979polynomial}). It is likely that the simple form of our system does not even require any sophisticated method to solve it. One just systematically lists all the solutions as sketched in the proof.
\end{rem}
Even a better estimate is given in the next lemma.
\begin{lem}\label{lem:estimateSols}
  The number of non-negative solutions to~(\ref{eq:Diophantine}) is upper bounded by
  \begin{equation}\label{eq:finalEstimate}
    {1\over2^5}(\ell+2)^5(\ell+1),
  \end{equation}
  where $\ell=\max{\ell_i}$.
\end{lem}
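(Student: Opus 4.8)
The plan is to count the non-negative solutions of~(\ref{eq:Diophantine}) by eliminating the ten unknowns in an order dictated by the triangular incidence pattern of the system: equation~$i$ of~(\ref{eq:Diophantine}) is the only one containing the diagonal unknown $\a_{ii}$, and when the equations are processed in the order $1,2,3,4$, every off-diagonal unknown $\a_{ij}$ with $j<i$ that occurs in equation~$i$ has already been fixed by an earlier equation. Concretely, specifying a solution of the whole system is the same as successively choosing: (i)~a non-negative solution $(\a_{11},\a_{12},\a_{13},\a_{14})$ of~(\ref{eq:Diophantine1}); (ii)~given $\a_{12}$, a non-negative solution $(\a_{22},\a_{23},\a_{24})$ of~(\ref{eq:Diophantine2}) with right-hand side replaced by $\ell_2-\a_{12}$; (iii)~given $\a_{13},\a_{23}$, a non-negative solution $(\a_{33},\a_{34})$ of the third equation in~(\ref{eq:Diophantine}) with right-hand side $\ell_3-\a_{13}-\a_{23}$; and (iv)~given $\a_{14},\a_{24},\a_{34}$, a value $\a_{44}$ with $2\a_{44}=\ell_4-\a_{14}-\a_{24}-\a_{34}$, of which there is \emph{at most one}, since the right-hand side must be a non-negative even integer. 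This last step is exactly where the coefficient~$2$ in front of the diagonal unknowns is used; without it the naive bound would only be $(\ell+1)^6$.

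Hence the number of solutions is bounded by the product of the maximal number of admissible choices at each of the steps~(i)--(iv), and it remains to bound the three elementary counting functions
\[
  f_1(s)=\#\{(a,b,c,d)\in\bbZ_{\geq0}^4:2a+b+c+d=s\},\quad
  f_2(s)=\#\{(a,b,c):2a+b+c=s\},\quad
  f_3(s)=\#\{(a,b):2a+b=s\},
\]
each appearing at an argument that is at most $\ell$. Here $f_3(s)=\lfloor s/2\rfloor+1\leq(s+2)/2$; summing over the free variable $c$ gives $f_2(s)=\sum_{c=0}^{s}f_3(s-c)$, and a one-line computation splitting on the parity of $s$ yields $f_2(s)=(s/2+1)^2$ for even $s$ and $f_2(s)=\tfrac14(s+1)(s+3)$ for odd $s$, so $f_2(s)\leq(s+2)^2/4$ in all cases; one further summation gives $f_1(s)=\sum_{d=0}^{s}f_2(s-d)\leq(s+1)(s+2)^2/4$. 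Since $f_1,f_2,f_3$ are non-decreasing, $\ell_1\leq\ell$, and every right-hand side occurring above is $\leq\ell$, the number of solutions is at most
\[
  f_1(\ell)\cdot\frac{(\ell+2)^2}{4}\cdot\frac{\ell+2}{2}
  \;\leq\;(\ell+1)\frac{(\ell+2)^2}{4}\cdot\frac{(\ell+2)^2}{4}\cdot\frac{\ell+2}{2}
  \;=\;\frac{1}{2^5}(\ell+2)^5(\ell+1),
\]
which is the asserted estimate~(\ref{eq:finalEstimate}). (As an alternative to the recursive bound on $f_1$, one may quote the exact expressions for the number of one-row solutions already displayed in the proof of Theorem~\ref{thm:phaseII} and check $f_1(\ell_1)\leq(\ell_1+1)(\ell_1+2)^2/4$ directly.)

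I do not expect a genuine obstacle here; the work is in organizing the count rather than in any single hard estimate. The two points to get right are (a)~identifying the triangular elimination order so that each equation introduces precisely one new diagonal unknown together with a batch of new off-diagonal unknowns counted by $f_1$, $f_2$ or $f_3$, and (b)~carrying the elementary parity bookkeeping cleanly enough that the constant collapses to exactly $2^5$ rather than something larger. A useful sanity check is $\ell=0$: then all $\ell_i=0$, the system has the unique solution $\ba=0$, and the right-hand side of~(\ref{eq:finalEstimate}) equals $1$, so the bound is tight there; for $\ell\geq1$ it is a strict over-estimate, consistent with the remark following the proof of Theorem~\ref{thm:phaseII}.
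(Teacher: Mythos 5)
Your proof is correct, and it reaches the stated bound by a genuinely different route than the paper. You perform a triangular elimination: processing the equations of~(\ref{eq:Diophantine}) in order, each one introduces exactly one new diagonal unknown plus a shrinking batch of new off-diagonal unknowns, so the total count is bounded by $f_1(\ell)\cdot f_2(\ell)\cdot f_3(\ell)\cdot 1$ with the elementary partition counts $f_k$ you compute; the coefficient $2$ on $\a_{44}$ makes the last equation contribute at most one choice, and the product collapses to ${1\over2^5}(\ell+2)^5(\ell+1)$. The paper instead first fixes the four diagonal unknowns (at most $(\ell/2+1)^4$ ways after forcing $\ell$ even) and then bounds the residual off-diagonal system by the fully symmetric case $\ell_i=\ell$, $\a_{ii}=0$, where the hidden relations $\a_{13}=\a_{24}$, $\a_{14}=\a_{23}$, $\a_{12}=\a_{34}$ reduce everything to a single equation with ${1\over2}(\ell+1)(\ell+2)$ solutions; the two factorizations happen to yield the identical constant $2^{-5}$. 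Your version buys two things: it works directly for unequal $\ell_i$ and odd $\ell$ without the ``promote $\ell$ to even'' step, and it does not rely on the paper's (plausible but unargued) claim that the symmetric zero-diagonal configuration maximizes the off-diagonal count. What it gives up is the structural observation $\a_{13}=\a_{24}$, $\a_{14}=\a_{23}$, $\a_{12}=\a_{34}$, which is of independent interest and is closer in spirit to the exact enumeration~(\ref{eq:ellEven}) quoted from~\cite{bradler2016dio}. Your $\ell=0$ sanity check is a nice touch and applies equally to both arguments.
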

\begin{proof}
  To simplify the derivation, assume $\ell_i=\max_i{\ell_i}\equiv\ell$ and if necessary let's promote $\ell$ to be the closest (greater) even number. From all admissible $\a_{ii}$ it is $\a_{ii}=0,\forall i,$  yielding the greatest number of solutions so instead of~(\ref{eq:Diophantine}) let's start by studying the following simpler Diophantine system:
    \begin{subequations}\label{eq:DiophantineSim}
      \begin{align}
        \a_{12}+\a_{13}+\a_{14} & = \ell,\label{eq:Diophantine1Sim} \\
        \a_{12}+\a_{23}+\a_{24} & = \ell,\label{eq:Diophantine2Sim} \\
        \a_{13}+\a_{23}+\a_{34} & = \ell,\label{eq:Diophantine3Sim} \\
        \a_{14}+\a_{24}+\a_{34} & = \ell.\label{eq:Diophantine4Sim}
      \end{align}
    \end{subequations}
  Eqs.~(\ref{eq:DiophantineSim}) inherited an interesting structure from~(\ref{eq:Diophantine}). Starting with~(\ref{eq:Diophantine2Sim}), every row shares exactly one variable (but always different) from all previous rows. From the last two lines of~(\ref{eq:DiophantineSim}) we observe  $\a_{13}+\a_{23}=\a_{14}+\a_{24}$. Using the  first two lines of~(\ref{eq:DiophantineSim}) we also find $\a_{13}+\a_{14}=\a_{23}+\a_{24}$ and so we get
  \begin{subequations}\label{eq:firstlast}
    \begin{align}
      \a_{13} & =\a_{24}, \\
      \a_{14} & =\a_{23}
    \end{align}
  \end{subequations}
  as the only consistent solutions. Then for any allowed pair of coefficients $\{\a_{13},\a_{14}\}$ (allowed means $\a_{13}+\a_{14}=L$ for $0\leq L\leq\ell$ and $L\in\bbZ_{\geq0}$) there is one pair $\{\a_{23},\a_{24}\}$ that can satisfy~(\ref{eq:firstlast}). This is because even though there are two other free variables, $\a_{12}$ and $\a_{34}$, the permutation symmetry of~(\ref{eq:DiophantineSim}) together with Eqs.~(\ref{eq:firstlast}) imply
  $$
  \a_{12}=\a_{34}.
  $$
  Finally, looking at~(\ref{eq:Diophantine1Sim}), we find $\mu[\a_{12}=L]=\ell+1-L$, where $\mu$ stands for the multiplicity of $\a_{12}$ with the value $L$, and so there is
  \begin{equation}\label{eq:finalCountinga00}
    \sum_{L=0}^\ell(\ell+1-L)={1\over2}(\ell+1)(\ell+2)
  \end{equation}
  solutions of~(\ref{eq:DiophantineSim}). By upper-bounding all $(\ell/2+1)^4$  systems of linear Diophantine equations by~\eqref{eq:finalCountinga00} we get~\eqref{eq:finalEstimate}. This scales as $\Ocal((\ell+2)^6)$.
\end{proof}
\begin{exa}\label{exa:ell2222}
  The number of conformations of $\l1^{2}2^{2}3^{2}4^{2}\ra_0$ is 17 and their exponents read
    \begin{equation}\label{eq:ell2222}
    \{\ba_c\}_{c=1}^{17}=\left\{\begin{array}{cccccccccc}
     (0 ,& 0 ,& 0 ,& 2 ,& 1 ,& 0 ,& 0 ,& 1 ,& 0 ,& 0) \\
     (1 ,& 0 ,& 0 ,& 0 ,& 0 ,& 0 ,& 2 ,& 1 ,& 0 ,& 0) \\
     (1 ,& 0 ,& 0 ,& 0 ,& 1 ,& 0 ,& 0 ,& 0 ,& 2 ,& 0) \\
     (1 ,& 0 ,& 0 ,& 0 ,& 1 ,& 0 ,& 0 ,& 1 ,& 0 ,& 1) \\
     (0 ,& 1 ,& 0 ,& 1 ,& 0 ,& 0 ,& 1 ,& 1 ,& 0 ,& 0) \\
     (0 ,& 2 ,& 0 ,& 0 ,& 0 ,& 0 ,& 0 ,& 0 ,& 2 ,& 0) \\
     (0 ,& 2 ,& 0 ,& 0 ,& 0 ,& 0 ,& 0 ,& 1 ,& 0 ,& 1) \\
     (1 ,& 0 ,& 0 ,& 0 ,& 0 ,& 1 ,& 1 ,& 0 ,& 1 ,& 0) \\
     (0 ,& 1 ,& 0 ,& 1 ,& 0 ,& 1 ,& 0 ,& 0 ,& 1 ,& 0) \\
     (0 ,& 0 ,& 0 ,& 2 ,& 0 ,& 2 ,& 0 ,& 0 ,& 0 ,& 0) \\
     (1 ,& 0 ,& 0 ,& 0 ,& 0 ,& 2 ,& 0 ,& 0 ,& 0 ,& 1) \\
     (0 ,& 0 ,& 1 ,& 1 ,& 1 ,& 0 ,& 0 ,& 0 ,& 1 ,& 0) \\
     (0 ,& 1 ,& 1 ,& 0 ,& 0 ,& 0 ,& 1 ,& 0 ,& 1 ,& 0) \\
     (0 ,& 0 ,& 1 ,& 1 ,& 0 ,& 1 ,& 1 ,& 0 ,& 0 ,& 0) \\
     (0 ,& 1 ,& 1 ,& 0 ,& 0 ,& 1 ,& 0 ,& 0 ,& 0 ,& 1) \\
     (0 ,& 0 ,& 2 ,& 0 ,& 0 ,& 0 ,& 2 ,& 0 ,& 0 ,& 0) \\
     (0 ,& 0 ,& 2 ,& 0 ,& 1 ,& 0 ,& 0 ,& 0 ,& 0 ,& 1)
    \end{array}
    \right\}
    \end{equation}
\end{exa}
\begin{rem}
  A calculation performed in~\cite{bradler2016dio} leads to a closed expression for the number of non-negative solutions of~\eqref{eq:Diophantine} for $\ell_i=\ell$ even to be
  \begin{equation}\label{eq:ellEven}
    \mathsf{e}=\frac{1}{576} (\ell+2) (\ell+4)\big(\ell (\ell+5) (\ell (\ell+4)+12)+72\big).
  \end{equation}
  For $\ell=2$ we get $\mathsf{e}=17$ from the previous example.
\end{rem}
\subsection*{Phase~III} Let's find the expression for the conformation multiplicity factor $\mu_{\ba}$. First an auxiliary lemma.
\begin{lem}\label{lem:fcnCardinality}
  Let $S$ and $T$ be discrete sets where $|S|=s,|T|=t$. Then there is
  \begin{equation}\label{eq:fcnCardinality}
    \binom{s}{n}\,t\times\ldots\times(t-n+1)
  \end{equation}
  bijective functions $f:X\mapsto Y$ where $X\subset S,Y\subset T$ and $|X|=|Y|=n$ where $0<n\leq\min{\{s,t\}}$. For $n=0$ we set Eq.~(\ref{eq:fcnCardinality}) to one.
\end{lem}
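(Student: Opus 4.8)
The plan is to count these objects directly with the product rule, after identifying them with a cleaner piece of data. The key observation is that a bijective $f\colon X\to Y$ with $X\subset S$, $Y\subset T$, $|X|=|Y|=n$ carries exactly the same information as a pair $(X,g)$, where $X\subset S$ with $|X|=n$ and $g\colon X\to T$ is an injection: given such a $g$, its image $g(X)$ is forced to be $Y$ and $g$ itself is a bijection from $X$ onto that image, so we may take $f=g$; conversely every admissible $f$ arises from exactly one such pair, namely $(X,f)$. Thus it suffices to count the pairs $(X,g)$, and in particular we never have to choose $Y$ separately — it is determined.

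First I would count the choices of $X$: there are $\binom{s}{n}$ subsets of $S$ of cardinality $n$. Next, with $X$ fixed, I would count the injections $g\colon X\to T$. This is the standard falling-factorial count $t(t-1)\cdots(t-n+1)$, which I would justify in one line by ordering the $n$ elements of $X$ as $x_1,\dots,x_n$ and assigning images sequentially: $g(x_1)$ can be any of the $t$ elements of $T$, then $g(x_2)$ any of the remaining $t-1$, and so on, the only constraint being injectivity; the total is $\prod_{k=0}^{n-1}(t-k)$. (An induction on $n$ gives the same thing if a more formal argument is wanted.) By the product rule, the number of pairs $(X,g)$, hence the number of the desired $f$, is $\binom{s}{n}\,t(t-1)\cdots(t-n+1)$, which is Eq.~(\ref{eq:fcnCardinality}).

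Finally I would dispose of the degenerate case $n=0$: the only function with empty domain is the empty function, so there is exactly one admissible $f$, in agreement with the stated convention — and also consistent with reading the formula at $n=0$, since $\binom{s}{0}=1$ and the empty product equals $1$. There is no genuine obstacle in this lemma; the one point that deserves a sentence of care is the claim that $f\mapsto(X,f)$ is a bijection between the set of admissible functions and the set of pairs $(X,g)$, since it is precisely this identification that lets us avoid double-counting the target set $Y$.
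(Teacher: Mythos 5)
Your proof is correct and follows essentially the same product-rule counting as the paper: choose the domain $X$ in $\binom{s}{n}$ ways, then count the $(t)_n$ ways to map it injectively into $T$. The only cosmetic difference is that the paper obtains the falling factorial as $\binom{t}{n}\,n!$ (choose the codomain $Y$, then count bijections onto it), whereas you count injections into $T$ directly by sequential assignment; these are the same count.
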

\begin{rem}
  If $s=t=n$ then~(\ref{eq:fcnCardinality}) becomes $n!$ which is  known to be the number of bijections from a set to itself (the number of permutations). To make the notation more concise in following text we will use the definition of the falling factorial: $(m)_n\df m\times\ldots\times (m-n+1)$. So Eq.~(\ref{eq:fcnCardinality}) is written as $\binom{s}{n}(t)_n$.
\end{rem}
\begin{proof}
  The coefficient $\binom{s}{n}$ is simply a number of all possible domains $X\subset S$ whose cardinality is $n$. Then for every domain $X$ there is $\binom{t}{n}n!$ codomains $Y\subset T$ of the same cardinality. The  coefficient $n!$ comes from the number of permutations within each of $\binom{t}{n}$ codomains $Y$.
\end{proof}
\begin{thm}\label{thm:phaseIII}
  For a given conformation exponent $\boldsymbol{\a}$ the multiplicity factor reads
  \begin{equation}\label{eq:multFactor}
    \mu_{\ba}=\G\g_{12}\g_{13}\g_{14}\g_{23}\g_{24}\g_{34},
  \end{equation}
  where
  \begin{subequations}\label{eq:Gammas}
    \begin{align}
      \G & = \binom{\ell_1}{2\a_{11}}\binom{\ell_2}{2\a_{22}}\binom{\ell_3}{2\a_{33}}\binom{\ell_4}{2\a_{44}}\prod_{i=1}^4(2\a_{ii}-1)!!,\label{eq:GammasGAMMA}\\
      \g_{12} & = \binom{\ell_1-2\a_{11}}{\a_{12}}(\ell_2-2\a_{22})_{\a_{12}},\label{eq:Gammasga12}\\
      \g_{13} & = \binom{\ell_1-2\a_{11}-\a_{12}}{\a_{13}}(\ell_3-2\a_{33})_{\a_{13}},\label{eq:Gammasga13}\\
      \g_{14} & = \binom{\ell_1-2\a_{11}-\a_{12}-\a_{13}}{\a_{14}}(\ell_4-2\a_{44})_{\a_{14}},\\
      \g_{23} & = \binom{\ell_2-2\a_{22}-\a_{12}}{\a_{23}}(\ell_3-2\a_{33}-\a_{13})_{\a_{23}}, \\
      \g_{24} & = \binom{\ell_2-2\a_{22}-\a_{12}-\a_{23}}{\a_{24}}(\ell_3-2\a_{44}-\a_{14})_{\a_{24}},\\
      \g_{34} & = \binom{\ell_3-2\a_{33}-\a_{13}-\a_{23}}{\a_{34}}(\ell_4-\a_{44}-\a_{14}-\a_{24})_{\a_{34}}.
    \end{align}
  \end{subequations}
\end{thm}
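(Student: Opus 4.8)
The plan is to read $\mu_{\ba}$ off as a purely combinatorial count and then evaluate it by a sequential construction whose successive stages reproduce, one by one, the factors $\G,\g_{12},\dots,\g_{34}$ in~(\ref{eq:multFactor}).

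First I would pin down what $\mu_{\ba}$ counts. Expanding $\l1^{\ell_1}2^{\ell_2}3^{\ell_3}4^{\ell_4}\ra_0$ by Isserlis' Theorem~\ref{thm:Isserlis} produces a sum over the $(2m-1)!!$ perfect matchings (Wick pairings) of the $2m$ field instances, treated as distinguishable. A pairing $r$ contributes the monomial determined by its \emph{profile}: for $i\le j$, let $\a_{ij}(r)$ be the number of edges of $r$ joining an instance of field $i$ to an instance of field $j$ (so $\a_{ii}(r)$ counts the monochromatic ``self-loop'' edges inside field $i$). Two pairings give the same monomial iff they have the same profile, whence $\mu_{\ba}$ is exactly the number of perfect matchings with profile $\ba$; such matchings exist precisely when $\ba$ solves the Diophantine system~(\ref{eq:Diophantine}), and otherwise $\mu_{\ba}=0$, consistently with a vanishing binomial in the formula. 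So it suffices to count these matchings.

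Next I would construct every profile-$\ba$ matching in a fixed order of stages. Stage~0 (monochromatic edges): for each colour $i$ in turn choose the $2\a_{ii}$ instances of field $i$ that are self-paired and a perfect matching among them; the colours use disjoint instances, so these choices are independent and contribute $\prod_{i=1}^4\binom{\ell_i}{2\a_{ii}}(2\a_{ii}-1)!!=\G$, using the classical count $(2k-1)!!$ of perfect matchings of a $2k$-set. After Stage~0 colour $i$ has $\ell_i-2\a_{ii}$ unmatched (``free'') instances. I then process the colour pairs in the order $(1,2),(1,3),(1,4),(2,3),(2,4),(3,4)$; at step $(i,j)$ I must add $\a_{ij}$ edges between the current free instances of colours $i$ and $j$, and Lemma~\ref{lem:fcnCardinality} (with $S$ the current free $i$-set, $T$ the current free $j$-set, $n=\a_{ij}$) gives $\binom{|S|}{\a_{ij}}(|T|)_{\a_{ij}}$ ways. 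Bookkeeping the free counts — colour $i$ loses $\a_{ij}$ free instances each time it participates, and by~(\ref{eq:Diophantine}) colour $1$ is exhausted after the first three steps, colour $2$ after step $(2,4)$, and so on — the running values $|S|$ and $|T|$ are precisely the binomial-top and falling-factorial arguments appearing in $\g_{12},\dots,\g_{34}$ in~(\ref{eq:Gammas}) (in particular $\g_{24}$ and $\g_{34}$ involve the colour-$4$ free count $\ell_4-2\a_{44}-\cdots$). Multiplying the independent stage counts yields $\mu_{\ba}=\G\,\g_{12}\g_{13}\g_{14}\g_{23}\g_{24}\g_{34}$.

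The only point needing genuine care — the ``obstacle'', though it is bookkeeping rather than depth — is verifying that this stage-by-stage recipe is a bijection onto the set of profile-$\ba$ matchings. A given matching determines its monochromatic edge sets (the Stage~0 data) and, for each $i<j$, its set of $(i,j)$-edges, which is a partial matching between colours $i$ and $j$ of size $\a_{ij}$ and is counted by Lemma~\ref{lem:fcnCardinality} with no over- or under-counting; hence the whole choice sequence is recovered uniquely. Conversely, any choice sequence terminates with all instances matched exactly because $\ba$ satisfies~(\ref{eq:Diophantine}). With the order of stages pinned to match the nested arguments of~(\ref{eq:Gammas}), the product of the Stage~0 count $\G$ and the six applications of Lemma~\ref{lem:fcnCardinality} is precisely~(\ref{eq:multFactor}). $\blacksquare$
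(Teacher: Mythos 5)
Your proposal is correct and follows essentially the same route as the paper: the self-contraction factor $\G$ from the $(2k-1)!!$ matching count, followed by six sequential applications of Lemma~\ref{lem:fcnCardinality} to the shrinking free sets in the fixed order $(1,2),(1,3),(1,4),(2,3),(2,4),(3,4)$; your extra framing of $\mu_{\ba}$ as the number of Wick pairings with a given profile, and the explicit bijection check, only make the paper's argument more precise. Your bookkeeping of the colour-$4$ free count ($\ell_4-2\a_{44}-\cdots$ in the falling factorials of $\g_{24}$ and $\g_{34}$) is the correct one and silently fixes what appear to be typographical slips in the printed formulas.
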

\begin{proof}
  We will use repeatedly Lemma~\ref{lem:fcnCardinality} by identifying $\l ij\ra_0^{\a_{ij}}$ for $i\neq j$ from Theorem~\ref{thm:phaseII} in the following way:
  \begin{align}
    n & =\a_{ij}, \\
    s & =\ell_i,\\
    t & =\ell_j.
  \end{align}
  So $S$ is the set of all $i$'s and $T$ is the set of all $j$'s. However, if we do more than one two-point Green's functions we have to take into account that the sets $S$ and $T$ might have shrunk. This depends on whether in the preceding Green's function $\l kl\ra_0^{\a_{kl}}$ we had $k=i$ or $l=j$. The strategy we will follow here is to first count the multiplicity of $\l ii\ra_0^{\a_{ii}}$ as they are independent (meaning non-overlapping for different $i$'s) and then the multiplicities of $\l 12\ra_0^{\a_{12}},\l 13\ra_0^{\a_{13}},\l 14\ra_0^{\a_{14}},\l 23\ra_0^{\a_{23}},\l 24\ra_0^{\a_{24}}$ and $\l 34\ra_0^{\a_{34}}$ in this order(!).

  The multiplicity of  $\l ii\ra_0^{\a_{ii}}$ immediately follows from Isserlis' theorem. Looking at Eq.~(\ref{eq:Isserlis}) we see that if $x_i=x_j,\forall j$ there will be $(2m-1)!!$ identical products. Hence ($m=\a_{ii}$)
  \begin{equation}\label{eq:selfMultiplicity}
    \g_{ii}=\binom{\ell_i}{2\a_{ii}}(2\a_{ii}-1)!!,
  \end{equation}
  as follows from Eq.~(\ref{eq:fcnCardinality})\footnote{Indeed, the double factorial follows either from Theorem~\ref{thm:Isserlis} as stated or the counting in Lemma~\ref{lem:fcnCardinality}.}. The factor of two in the binomial `denominator' accounts for the two $i$'s in  $\l ii\ra_0$. Repeating this procedure for all $i$ we get $\G=\g_{11}\g_{22}\g_{33}\g_{44}$ in Eq.~(\ref{eq:GammasGAMMA}). To get $\g_{ij}$ we then have to keep track of the set cardinality. To get $\g_{12}$ (using the notation of Lemma~\ref{lem:fcnCardinality}) we set $n=\a_{12}$ and notice that the cardinality of the discrete set of `ones' is decreased by those contributing to $\g_{11}$. So $s=\ell_1-2\a_{11}$. Similarly, the cardinality of the `twos' is decreased by the number of elements contributing to $\g_{22}$. So $t=\ell_2-2\a_{22}$ and $\g_{12}$ in (\ref{eq:Gammasga12}) follows. At this point it is evident how to proceed so let's only derive $\g_{13}$ which is next in line. We set $n=\a_{13}$ and the set of `ones' is now smaller by those contributing to $\g_{11}$ but also to those from $\g_{12}$ calculated in the previous step. Hence $s=\ell_1-2\a_{11}-\a_{12}$. The set of `threes' was decreased only by the elements contributing to $\g_{33}$ and so $t=\ell_3-2\a_{33}$ and Eq.~(\ref{eq:Gammasga13}) follows. Going up to $\g_{34}$ we arrive at the overall multiplicity factor, Eq.~(\ref{eq:multFactor}).
\end{proof}
\begin{rem}
    We emphasize the importance of the order in which we follow the construction of $\g_{ij}$. Other possibilities are certainly plausible but we believe the sketched one is the most intuitive.
\end{rem}
\begin{exa}(Important)
  Let's find the multiplicities from Example~\ref{exa:ell2222}. By setting $\ell_i=2$ and plugging $\{\boldsymbol{\a}_c\}_{c=1}^{17}$ into Eqs.~(\ref{eq:Gammas}) we get
    \begin{equation}\label{eq:multiell2222}
      \mu_{\boldsymbol{\a}_c}=(2, 2, 2, 1, 8, 4, 2, 8, 16, 4, 2, 8, 16, 16, 8, 4, 2),
    \end{equation}
    where $c=1,\dots,17$. Crucially (and this is a highly non-trivial check of both Theorems~\ref{thm:phaseII} and~\ref{thm:phaseIII}!), we get
    \begin{equation}\label{eq:checkell2222}
      \sum_{c=1}^{17}\mu_{\boldsymbol{\a}_c}=105=7!!=\Big(\sum_{i=1}^{4}\ell_i-1\Big)!!.
    \end{equation}
    According to Eq.~(\ref{eq:Isserlis}) we have $\sum_{i=1}^4\ell_i=2m$ and so the RHS of~(\ref{eq:checkell2222}) is the total number of products as it should be according to Isserlis' theorem.
\end{exa}
\subsection*{Putting all pieces together}
We use Proposition~\ref{prop:phaseI} to insert Eqs.~(\ref{eq:expEven}) and~(\ref{eq:expOdd}) to Eq.~(\ref{eq:omega}) and get the output state components to the $n$-th order. Let's take a look at the $\kbr{00}{00}$ component:
\begin{align}\label{eq:omega00}
  \om_{AB}(\kbr{00}{00})
  &= \bra{0_M}\mathsf{T}\Big\{
    \sum_{m+m'=0}^{n}{\la^{2(m+m')}(-1)^{m+m'}\over (2m)!(2m')!}\bigg(\sum_{\ell=0}^m\binom{2m}{2\ell}\,A_+^\ell A_-^\ell B_+^{m-\ell}B_-^{m-\ell}\bigg)\nn\\
    &\quad\times
    \bigg(\sum_{\ell'=0}^{m'}\binom{2m'}{2\ell'}\,A_+^{\ell'} A_-^{\ell'} B_+^{m'-\ell'}B_-^{m'-\ell'}\bigg)^\dg
  \Big\} \ket{0_M}\nn\\
  &= \bra{0_M}\mathsf{T}{\Big\{
    \sum_{m+m'=0}^{n}{\la^{2(m+m')}(-1)^{m+m'}\over (2m)!(2m')!}
    \binom{2m}{2\ell}\binom{2m'}{2\ell'}
    \,A_+^{\ell+\ell'} A_-^{\ell+\ell'} B_+^{m+m'-\ell-\ell'}B_-^{m+m'-\ell-\ell'}
  \Big\}} \ket{0_M}\nn\\
    &= \sum_{m+m'=0}^{n}{\la^{2(m+m')}(-1)^{m+m'}\over (2m)!(2m')!}
    \binom{2m}{2\ell}\binom{2m'}{2\ell'}
    \l1^{\ell+\ell'}2^{\ell+\ell'}3^{m+m'-\ell-\ell'}4^{m+m'-\ell-\ell'}\ra_0.
\end{align}
We identified  $A_+\rightleftharpoons1,A_-\rightleftharpoons2,B_+\rightleftharpoons3$ and $B_-\rightleftharpoons4$ and in this form it is ready for Theorem~\ref{thm:phaseII} and onto Theorem~\ref{thm:phaseIII}. The number of terms in~Eq.~(\ref{eq:omega00}) is polynomial in $m$ and therefore in $n$ as well. By setting $\ell_1=\ell_2=\ell+\ell',\ell_3=\ell_4=m+m'-\ell-\ell'$ and taking the greatest $\ell_i$, Lemma~\ref{lem:estimateSols} gives us a tighter polynomial upper bound on the number of solutions compared to Theorem~\ref{thm:phaseII}. We thus arrived at the main result of the paper.

\section{Interacting fields in the $\phi^n$ scalar model to the second perturbative order}\label{sec:phin}

One of the simplest interacting theory is given by the Lagrangian
\begin{equation}\label{eq:Lagrphi34}
  \Lcal=\Lcal_0+\Lcal_{\mathrm{int}}={1\over2}\partial^\nu\phi\partial_\nu\phi-{1\over2}m^2\phi^2-g{\phi^n\over n!},
\end{equation}
where $\phi$ is a real  scalar field and $g$ a coupling constant. The model is not exactly solvable and  its perturbative evolution, Feynman rules and much more (such as renormalization) are typically trained on $n=3$~\cite{schwartz2014quantum} or on a more physical model $n=4$. The quantity of interest is again the $n$-point Green's  function $\bra{\Om}\mathsf{T}\{\phi(x_1)\dots\phi(x_n)\}\ket{\Om}$ for interacting fields and its expectation value is calculated wrt to the interacting vacuum $\ket{\Om}$. Isserlis's or Wick's theorem cannot be applied for interacting fields but from the Lagrangian (or Feynman path integral) formulation~\cite{schwartz2014quantum} we know that the interacting Green's function satisfies
\begin{equation}\label{eq:GreenInteracting}
  \bra{\Om}\mathsf{T}\{\phi(x_1)\dots\phi(x_k)\}\ket{\Om}={\bra{0_M}\mathsf{T}\{\phi_0(x_1)\dots\phi_0(x_k)\exp{[{i\over n!}\int \diff{4}{x}\phi_0^n]\}}\ket{0_M}\over\bra{0_M}\mathsf{T}\{\exp{[{i\over n!}\int\diff{4}{x}\phi_0^n]}\}\ket{0_M}}.
\end{equation}
Interestingly, for $k=2$ Theorems~\ref{thm:phaseII} and~\ref{thm:phaseIII} can be used to an automatic calculation up to the second order for any $n$ as we will now illustrate for $n=3$ and $n=4$.
\begin{exa}[$\phi^3$ for $k=2$]
  We expand the numerator of (\ref{eq:GreenInteracting}) as is customary~\cite{schwartz2014quantum,peskin1995introduction} and take the third term. It is proportional to
  \begin{equation}\label{eq:phi3expansionNUM}
    \int\diff{4}{x}\int\diff{4}{y}\bra{0_M}\mathsf{T}\{\phi_0(x_1)\phi_0(x_2)\phi_0(x)^3\phi_0(y)^3\}\ket{0_M}
    =  \int\diff{4}{x}\int\diff{4}{y} \l1^{1}2^{1}3^{3}4^{3}\ra_0,
  \end{equation}
  where on the RHS we use the notation of Theorem~\ref{thm:phaseII} by setting $x_1\leftrightharpoons1,x_2\leftrightharpoons2,x\leftrightharpoons3$ and $y\leftrightharpoons4$. Similarly the second term of the denominator is proportional to
  \begin{equation}\label{eq:phi3expansionDENOM}
    \int\diff{4}{x}\int\diff{4}{y}\bra{0_M}\mathsf{T}\{\phi_0(x)^3\phi_0(y)^3\}\ket{0_M}
    =  \int\diff{4}{x}\int\diff{4}{y} \l1^{0}2^{0}3^{3}4^{3}\ra_0.
  \end{equation}
  Using Theorem~\ref{thm:phaseII} we find 8 conformations for the correlator in Eq.~(\ref{eq:phi3expansionNUM}) whose exponents read
  \begin{equation}\label{eq:ell1133}
    \{\ba_c\}_{c=1}^{8}=\left\{\begin{array}{cccccccccc}
     (0 ,& 0 ,& 0 ,& 1 ,& 0 ,& 0 ,& 1 ,& 1 ,& 1 ,& 0) \\
     (0 ,& 1 ,& 0 ,& 0 ,& 0 ,& 0 ,& 0 ,& 0 ,& 3 ,& 0) \\
     (0 ,& 1 ,& 0 ,& 0 ,& 0 ,& 0 ,& 0 ,& 1 ,& 1 ,& 1)\\
     (0 ,& 0 ,& 0 ,& 1 ,& 0 ,& 1 ,& 0 ,& 0 ,& 2 ,& 0) \\
     (0 ,& 0 ,& 0 ,& 1 ,& 0 ,& 1 ,& 0 ,& 1 ,& 0 ,& 1) \\
     (0 ,& 0 ,& 1 ,& 0 ,& 0 ,& 0 ,& 1 ,& 0 ,& 2 ,& 0) \\
     (0 ,& 0 ,& 1 ,& 0 ,& 0 ,& 0 ,& 1 ,& 1 ,& 0 ,& 1) \\
     (0 ,& 0 ,& 1 ,& 0 ,& 0 ,& 1 ,& 0 ,& 0 ,& 1 ,& 1)
    \end{array}\right\}
  \end{equation}
  and Theorem~\ref{thm:phaseIII} yields their multiplicities
  \begin{equation}\label{eq:multiell1133}
      \mu_{\ba_c}=(18, 6, 9, 18, 9, 18, 9, 18).
  \end{equation}
  We check the solution by inspecting
  \begin{equation}\label{eq:checkell1133}
      \sum_{c=1}^{8}\mu_{\ba_c}=105=7!!=\Big(\sum_{i=1}^{4}\ell_i-1\Big)!!,
  \end{equation}
  where the RHS follows from $\ell_1=\ell_2=1$ and $\ell_3=\ell_4=3$.

  For the denominator we get
  \begin{equation}
    \{\ba_c\}_{c=1}^{2}=\left\{
    \begin{array}{cccccccccc}
     (0 ,& 0 ,& 0 ,& 0 ,& 0 ,& 0 ,& 0 ,& 0 ,& 3 ,& 0) \\
     (0 ,& 0 ,& 0 ,& 0 ,& 0 ,& 0 ,& 0 ,& 1 ,& 1 ,& 1)
    \end{array}
    \right\}
  \end{equation}
  and
  \begin{equation}\label{eq:multiell0033}
      \mu_{\ba_c}=(6,9).
  \end{equation}
  We again check the consistence
  \begin{equation}\label{eq:checkell1133}
      \sum_{c=1}^{2}\mu_{\ba_c}=\Big(\sum_{i=1}^{4}\ell_i-1\Big)!!=15.
  \end{equation}
  By comparing with~\cite{schwartz2014quantum}(ch. 7) we also find a perfect agreement. From this point we proceed as usual. Since we integrate over the `internal' variables (here labelled 3 and 4), some conformations represent the same physical process. We can easily find them from (\ref{eq:ell1133}) as those invariant wrt the swap $3\leftrightharpoons4$. Those are the following pairs of ten-tuples: $(\ba_1,\ba_8)$, $(\ba_4,\ba_6)$ and $(\ba_5,\ba_7)$. Considering the double multiplicities we reproduce and finalize the derivation of the second order of the $\phi^3$ theory~\cite{schwartz2014quantum}(ch. 7).
\end{exa}
Of course, these are standard results obtainable through Wick's theorem. Yet, the ease with which we reproduced them is notable  as we don't need to count the contractions or draw the Feynman diagrams. An implementation of  Theorems~\ref{thm:phaseII} and~\ref{thm:phaseIII} in a suitable programming language such as Mathematica is straightforward. For example, one can convert the proof of Theorem~\ref{thm:phaseII} into a program that systematically finds all solutions. Essentially, if it takes one time step to find the first solution then all $t$ solutions can be find in $t$ time steps where $t$ increases polynomially. There is really no Diophantine calculation involved since due to the simple form of~(\ref{eq:Diophantine}) we get \emph{all} the solutions by inspection and only need to list them (save them to a memory).

The whole process would arguably be  much  more impressive for more than two interacting fields and for high-loop corrections. In that case the number of variables increases and Theorem~\ref{thm:phaseII} needs to be generalized. This can be done in a straightforward way but the presence of more variables is ultimately responsible for the factorial growth of the perturbative contributions. We obviously cannot get rid of the factorial growth but the easy-to-see character of the Diophantine linear system remains and so all the solutions can again be listed in $t$ time steps ($t$ grows factorially with a perturbative order, though). In this way, we can again get the perturbative contributions for any number of interacting fields, to a essentially an arbitrarily high perturbative order (keeping in mind the exponential growth discussed in this paragraph) and for all $n$ with  no effort whatsoever.

Furthermore, the Feynman diagrams are encoded in the conformation exponents. This is not surprising as $\l ij\ra_0$ represents a free propagator $\Delta(x_i-x_j)$ for $i,j$ being the internal degrees of freedom. So we easily recover them  from our approach if they are still needed. Let's illustrate it on our next example.
\begin{exa}[$\phi^4$ for $k=2$]
  Let's calculate the third term from the expanded numerator of~Eq.~(\ref{eq:GreenInteracting}). It is proportional to
  \begin{equation}\label{eq:phi4expansionNUM}
    \int\diff{4}{x}\int\diff{4}{y}\bra{0_M}\mathsf{T}\{\phi_0(x_1)\phi_0(x_2)\phi_0(x)^4\phi_0(y)^4\}\ket{0_M}
    =  \int\diff{4}{x}\int\diff{4}{y} \l1^{1}2^{1}3^{4}4^{4}\ra_0.
  \end{equation}
  As before, Theorem~\ref{thm:phaseII} reveals all the conformation exponents
  \begin{equation}\label{eq:ell1144}
   \{\ba_c\}_{c=1}^{11}=\left\{
    \begin{array}{cccccccccc}
     (0 ,& 0 ,& 0 ,& 1 ,& 0 ,& 0 ,& 1 ,& 1 ,& 2 ,& 0) \\
     (0 ,& 0 ,& 0 ,& 1 ,& 0 ,& 0 ,& 1 ,& 2 ,& 0 ,& 1) \\
     (0 ,& 1 ,& 0 ,& 0 ,& 0 ,& 0 ,& 0 ,& 0 ,& 4 ,& 0) \\
     (0 ,& 1 ,& 0 ,& 0 ,& 0 ,& 0 ,& 0 ,& 1 ,& 2 ,& 1) \\
     (0 ,& 1 ,& 0 ,& 0 ,& 0 ,& 0 ,& 0 ,& 2 ,& 0 ,& 2) \\
     (0 ,& 0 ,& 0 ,& 1 ,& 0 ,& 1 ,& 0 ,& 0 ,& 3 ,& 0) \\
     (0 ,& 0 ,& 0 ,& 1 ,& 0 ,& 1 ,& 0 ,& 1 ,& 1 ,& 1) \\
     (0 ,& 0 ,& 1 ,& 0 ,& 0 ,& 0 ,& 1 ,& 0 ,& 3 ,& 0) \\
     (0 ,& 0 ,& 1 ,& 0 ,& 0 ,& 0 ,& 1 ,& 1 ,& 1 ,& 1) \\
     (0 ,& 0 ,& 1 ,& 0 ,& 0 ,& 1 ,& 0 ,& 0 ,& 2 ,& 1) \\
     (0 ,& 0 ,& 1 ,& 0 ,& 0 ,& 1 ,& 0 ,& 1 ,& 0 ,& 2) \\
    \end{array}
    \right\}
  \end{equation}
  and Theorem~\ref{thm:phaseIII} their multiplicities
  \begin{equation}\label{eq:multiell1144}
    \mu_{\ba_c}=(144, 36, 24, 72, 9, 96, 144, 96, 144, 144, 36)
  \end{equation}
  followed by the obligatory check $ \sum_{c=1}^{11}\mu_{\ba_c}=\Big(\sum_{i=1}^{4}\ell_i-1\Big)!!=9!!$.   The unique conformations are depicted in Fig.~\ref{fig:feynmanphi4}. Hence we can immediately obtain the standard way of performing the perturbative expansion of the $\phi^4$ model (for two interacting fields and to the second order). The same is true for any $\phi^n$.
\end{exa}
    \begin{figure}[t]
   \fcolorbox{white}{white}{\includegraphics[scale=.8]{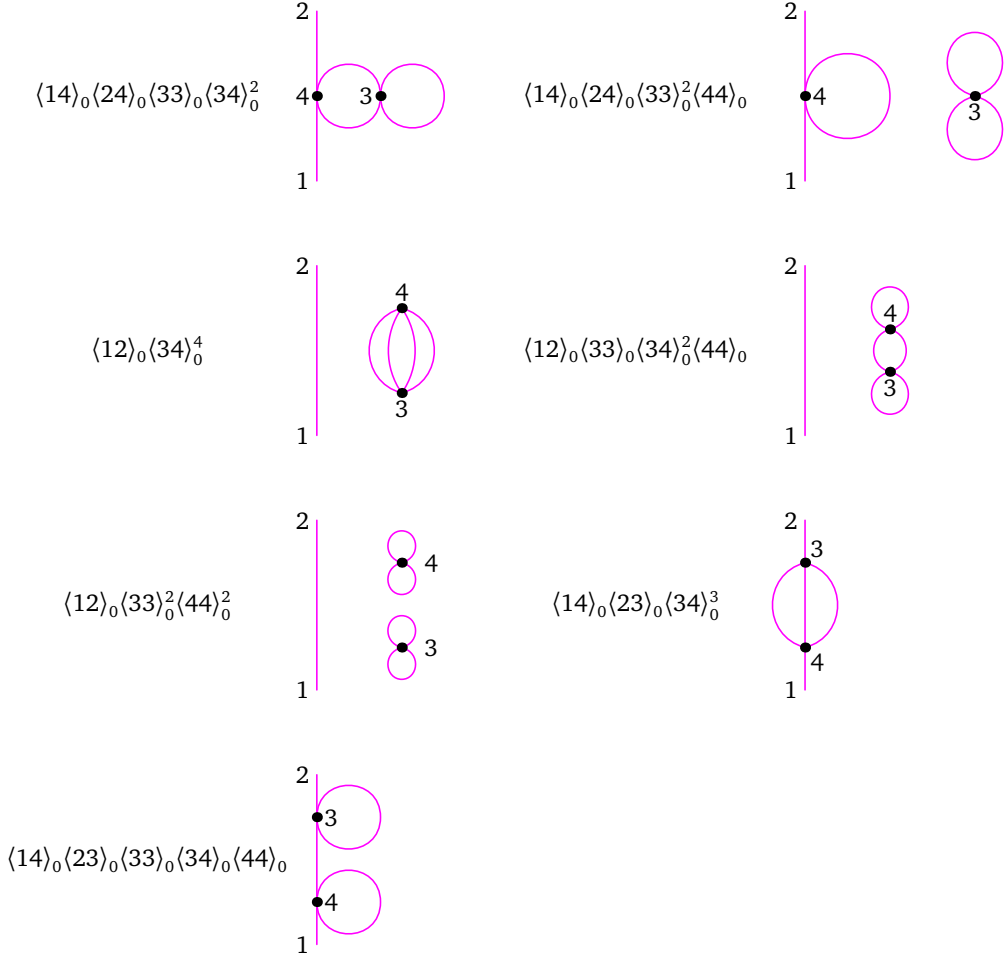}}
    \caption{Seven unique conformations (for the used convention see Definition~\ref{def:coformations} on page~\pageref{def:coformations}) and their Feynman diagrams out of eleven conformations~Eq.~(\ref{eq:phi4expansionNUM}) splits into. Conformation exponents~(\ref{eq:ell1144}) provide a succinct way of encoding Feynman's drawings. The dots (labeled $3$ and $4$) are the internal coordinates $x$ and $y$ we integrate over in~(\ref{eq:phi4expansionNUM}). The remaining four conformations are obtained by swapping 3 and 4 and therefore represent the same physical process.}
    \label{fig:feynmanphi4}
   \end{figure}
%

\section*{Conclusions and open problems}

In this paper we present two results: the main one is an efficient perturbative calculation of the interaction Hamiltonian for a pair of Unruh-DeWitt detectors in Minkowski spacetime. The result is independent on the detector details and works for any smearing, envelope function or trajectory by efficiently  decomposing a generic $2n$-point correlator to a product of two-point Green's functions where the detector details play a role. An efficient calculation means that the number of perturbative terms increases polynomially with the perturbative order of the coupling constant. This is atypical in quantum field theory as even in the simplest models (such as the interacting scalar $\phi^n$ model) the number of perturbative terms explodes factorially.

A notable accompanying result is an extremely straightforward calculation of the perturbative expansion for the previously mentioned interacting scalar $\phi^n$ model without going through the construction of Feynman diagrams (they can be immediately recovered from our approach). The presented method seems novel and even though the usual route through the creation of Feynman diagrams can be computerized  the proposed technique here mechanized the whole perturbative calculation to the point of no effort. We demonstrate it on the typical case of $n=3,4$ for the  $\phi^n$ model, for two interacting fields and up to the second order. A generalization for many interacting fields and a high perturbative order (again for any $n$) seems straightforward. This does not say that we can avoid the factorial growth of perturbative contributions -- we can just streamline it as much as possible. The details of this process can be an interesting direction to pursue.

Going back to the main result, it also offers several routes for further explorations. One is the study of more than two Unruh-DeWitt detectors and showing the same tractability as presented here for two detectors. It seems that for a fixed number of Unruh-DeWitt detectors the growth remains polynomial. Also, the technique presented here could be adopted for a detection of spinor fields obeying the fermion statistics~\cite{louko2016unruh}.

\bibliographystyle{unsrt}


\end{document}